\definecolor{lam1}{HTML}{900C3F}
\definecolor{lam2}{HTML}{364D73}
\newtheoremstyle{mytheoremstyle} 
    {\topsep}                    
    {\topsep}                    
    {\itshape}                   
    {}                           
    {\sc}                   
    {.}                          
    {.5em}                       
    {}  
\theoremstyle{mytheoremstyle}
\newtheorem{theorem}{Theorem}
\newtheorem{lemma}{Lemma}
\newtheorem{corollary}[theorem]{Corollary}
\newtheorem{axiom}{Axiom}
\newtheoremstyle{scfont} 
    {\topsep}                    
    {\topsep}                    
    {}                   
    {}                           
    {\scshape}                   
    {}                          
    {.5em}                       
    {\textbf{Axiom \thmnumber{#2}\thmname{#1}}\thmnote{---{\textsc{#3}.}}}
\theoremstyle{scfont}
\theoremstyle{remark}   
\newtheorem{remark}{Remark}
\newtheorem{example}{Example}
\newcounter{ax}
\newcounter{ex}
\titleformat*{\section}{\large\scshape\centering}
\titleformat*{\subsection}{\scshape\centering}
\titleformat*{\subsubsection}{\itshape}
\titleformat*{\paragraph}{\large\bfseries\centering}
\titleformat*{\subparagraph}{\large\bfseries\centering}
\titlespacing*{\section}{0pt}{5.5ex plus 1ex minus .2ex}{3.5ex plus .2ex}
\renewcommand{\emptyset}{\varnothing}
\renewcommand{\phi}{\varphi}
\renewcommand{\implies}{\rightarrow}
\def \A{\mathbf{A}}
\def \B{\mathbf{B}}
\def \<{\langle}
\def \>{\rangle}
\def \s{\succcurlyeq}
\def \sq{\sqsubseteq}
\def \1{\mathbf{1}}
\def \0{\mathbf{0}}
\def\im{\textup{im}}
\def\KO{\mathcal{KO}}
\def\CO{\mathcal{CO}}
\newcommand{\mylabel}[3]{\def\@currentlabel{#2}\phantomsection\textsc{#3} (\texttt{#2})\label{#1}}
\newcommand{\sclabel}[3]{\def\@currentlabel{\theax}\phantomsection\lowercase{\textsc{#3}} (\textsc{#2})\label{ax:#1}}
\renewcommand{\r}[1]{\hyperref[#1]{\textup{(\texttt{\ref{#1}})}}}
\newcommand{\rax}[1]{\hyperref[ax:#1]{\textup{(\textsc{#1})}}}
\newcommand{\suc}[1]{{\uparrow}(\succ, #1)}
\newcommand{\uc}[1]{{\uparrow}(\s, #1)}
\newcommand{\lc}[1]{{\downarrow}(\s, #1)}
\newcommand{\sig}[1]{\sigma_{#1}}
\DeclareMathOperator*{\sx}{\s^{\hspace{-.2ex}\leftrightarrows}}
\DeclareMathOperator*{\ssr}{\s^{\hspace{-.2ex}\rightarrow}}
\DeclareMathOperator*{\ssl}{\s^{\hspace{-.2ex}\leftarrow}}
\DeclareMathOperator*{\st}{\unrhd}
\DeclareMathOperator*{\stx}{\unrhd^{\hspace{-.2ex}\leftarrow}}
\DeclareMathOperator*{\succt}{\rhd}
\DeclareMathOperator*{\succtx}{\rhd^{\hspace{-.2ex}\leftarrow}}
\DeclareMathOperator*{\cng}{\beta}
\definecolor{darkblue}{rgb}{0,0,0.7}
\definecolor{darkgreen}{rgb}{0,0.4,0}
\newenvironment{change}{\color{darkblue}}{}
\newcommand{\BC}{\begin{change}\marginnote{*}}
\newcommand{\EC}{\end{change}}
\title{Coarse Descriptions and Cautious Preferences}
\author{
Evan Piermont\footnote{Royal Holloway, University of London, Department of Economics, United Kingdom; \href{mailto:evan.piermont@rhul.ac.uk}{\tt evan.piermont@rhul.ac.uk}.} \ and 
Marcus Pivato\footnote{Centre d’\'Economie de la Sorbonne, Universit\'e Paris 1 Panth\'eon-Sorbonne, France; \href{mailto:Marcus.Pivato@univ-paris1.fr}{\tt Marcus.Pivato@univ-paris1.fr}}
} 
\begin{document}

\maketitle

\begin{abstract}
We consider a model where an agent is must choose between alternatives that each provide only an {\em imprecise} description of the world (e.g. linguistic expressions).   The set of alternatives is closed under logical conjunction and disjunction, but not necessarily negation.  (Formally: it is a distributive lattice, but not necessarily a Boolean algebra).   In our main result, each alternative is identified with a subset of an
(endogenously defined) state space, and  two axioms characterize {\em maximin} decision making.  This means:  from the agent's preferences over alternatives,
we derive a preference order on the endogenous state space, such that
 alternatives are ranked in terms of their worst outcomes.  

 \medskip
 
{\bf JEL class:} D81, D86, D73 

{\bf Keywords:} Syntactic decisions, maximin, distributive lattice, Heyting algebra, Boolean algebra, MV-algebra.
\end{abstract}

\section{Introduction}

Any verbal description of an economic situation is necessarily  incomplete.  This creates two problems for decision theory.
First, virtually all  models in economics suppose that agents choose between precisely identified alternatives.  But in practice, agents often choose between {\em verbal descriptions} of alternatives.  For example: when a firm negotiates a legal contract, or a parliament debates legislation, the objects of consideration are written texts.  When a manager in an organization gives  instructions to a subordinate or circulates a memo, her decision  concerns the {\em wording} of the instructions or the memo.  In all these cases, the agent can never provide more than an imprecise verbal description of the  outcome they hope to achieve.   We need a model of decision-making that works with such coarse descriptions.

Second, even when an agent  really does  choose between precisely identified
outcomes, we face a methodological problem.  The analyst's (verbal) description of the agent's decisions will inevitably be imprecise.  This  creates a wedge between an agent's preferences and the analyst's ability to describe, record, or investigate those preferences. To take an elementary example, consider an attempt to inquire into an agent's preference between apples and bananas. A modeler might simply ask her what she would choose between an ``\emph{apple}'' and a ``\emph{banana}'', but these serve only as coarse descriptions that ignore the particularities: their variety, ripeness, whether they have been dropped, their country of origin, etc. So while the agent might prefer the abstract ``\emph{banana}'' to the abstract  ``\emph{apple},'' her preference might be overturned upon further specification: between, say, a ``\emph{185 gram crisp British Gala apple}'' and a ``\emph{slightly over-ripe and mildly dented Chiquita banana}.''

These later descriptions, while surely more cumbersome, form a perfectly valid inquiry. Indeed, we find no theoretical problem with the standard model so long as the objects of choice can eventually be described with sufficient detail so as to decide any preference-relevant query.\footnote{By standard model, we refer to the preponderance of decision theory wherein the choice objects are delineated by some set $X$ and querying the agent's preference between any $x,y \in X$ is costless. Thus, tacitly the elements of $X$ are labeled by some universal set of labels agreed upon by the modeler and the agent.}
But as already noted, this is the exception rather than the rule: many economically relevant outcomes are unboundedly complex, so that any finite description of them can be further resolved to include additional relevant details. 
Put differently: even if the agent's preferences are defined over ideal 
outcomes that fully determine the answer to every potential query
about them, our ability to identify these preferences is limited by what we can describe.  
   
   
   These two problems are part of the impetus for the growing literature on syntactic decision theory \cite[]{Jeffrey65,blume2021constructive,bjorndahl2021language,piermont2023failures,piermont2023vague,guerdjikova2023bounded}, and  motivate the present paper.  
 In this paper, we provide  minimum consistency requirements to ensure that a modeler can perfectly recover an agent's true preference by observing only her preference over coarse descriptions of the outcomes. Importantly,   the modeler need not directly observe the set of objects at all, only the set of descriptions.    Our decision criterion is inspired by the observation that in many of the scenarios described above, decision-makers are cautious and conservative, and assess each coarse description in terms of its {\em worst possible instantiation}.  For example,  when negotiating a legal contract,  an agent regards her counterparty as an adversary, and  considers how each clause in the contract could be weaponized  against her.
In an adversarial legislature, such as the United States Congress or the British House of Parliament, each legislative proposal is criticized by its opponents in terms of its most baleful conceivable consequences.  Before any administrative decision, from launching a major new policy to approving a minor request,  bureaucrats and managers  ask themselves, ``How could this decision go badly wrong? And if it does, will I be blamed?"  

As these examples suggest, in some decisions involving coarse descriptions, an agent will adopt a pessimistic or {\em maximin} decision procedure:   
she will prefer $a$ over $b$
 if the worst possible instantiation of description  $a$  is better than the worst instantiation of $b$.  More generally, in cases where there is no such ``worst instantiation'', she will prefer $a$ over $b$ if, for every instantiation of $a$, there is a {\em worse} instantiation of $b$.  Our main result is an axiomatic characterization of this criterion.
 
  In the study of decision-making under uncertainty, the maximin criterion has been an important alternative to subjective expected utility maximization since the early work of Wald \cite{Wald45}.   But our work differs from Wald's and others in that  the primitives of our model do not contain a state space or any other explicit representation of uncertainty;  there is just a set of descriptions, $\A$, and a preference relation thereon, $\s$. 
Given this minimal structure, we characterize---in two very simple axioms---representations of the form ${\<X, \sigma, \s^*\>}$, where $X$ is a set of outcomes, $\sigma: \A \to 2^X$ determines for each description $a\in \A$, the set of outcomes $\sigma(a) \subseteq X$ that satisfy it, and $\s^*$ is a preference relation on $X$ such that for all $a,b \in \A$:
\begin{equation}
\label{eq:daulOrdINTRO}
a \s b \quad \text{ if and only if } \quad  \forall x_a \in \sigma(a),\  \exists y_b  \in \sigma(b), \text{ such that } x_a \s^* y_b.
\end{equation}
For example, 
if $a =$ ``\emph{apple}'' and $b =$ ``\emph{banana}'', then $\sigma(a) \subseteq X$ is the set of all apples and $\sigma(b)$ all bananas. More detailed descriptions, as above, thus lead to smaller sets of 
outcomes. 
Critically, there will in general be no description of any particular $x \in X$ (that is, there is no $c \in \A$ such that $\sigma(c) = x$).  

The connection between $\s$ and $\s^*$, as governed by \eqref{eq:daulOrdINTRO} is that of a cautious  decision maker. The agent prefers $a =$ ``\emph{apple}'' to $b =$ ``\emph{banana}''  iff for every apple $x_a \in \sigma(a)$ she can find some banana $y_b \in \sigma(b)$ worse than it, that is, such that $x_a \s^* y_b$. Notice this implies $a$ is \emph{strictly} preferred to $b$ exactly when there exists some $y_b \in \sigma(b)$ such that every $x_a \in \sigma(a)$ is strictly preferred  to $y_b$ (by $\s^*$). From this vantage, it is clear that  $\s$ represents maximizing the worst-case guarantee. In particular, should such worst-case outcomes exist (e.g., if $\sigma(a)$ and $\sigma(b)$ were  finite), this criteria would reduce to simply comparing them, i.e., to maximin.

Our key technical result shows that the orders $\s$ and $\s^*$ are {\em dual} to each other, as we now explain.  Just as every description in $\A$ can be identified with the set of outcomes in $X$ that satisfy that description, every outcome in $X$ can be identified with {\em the set of descriptions in $\A$ that it satisfies}.
Formally, for any $x\in X$, let $\tau(x):=\{a\in A$; \ $x\in\sigma(a)\}$.
We  show that $\s$ can be derived from $\s^*$ via  the ``maximin'' formula (\ref{eq:daulOrdINTRO}) iff
  $\s^*$ can be derived from $\s$ via the dual ``maximax'' formula:
\begin{equation}
\label{eq:daulOrdINTRO2}
x \s^* y \quad \text{ if and only if } \quad  \forall b \in \tau(y),\  \exists a  \in \tau(x), \text{ such that } a \s b.
\end{equation}
We require a minimal algebraic structure on $\A$: for any two descriptions $a,b \in \A$, there are further descriptions $a \land b$ and $a \lor b$.
The former, read ``$a$ and $b$,'' corresponds to the requirement that both the descriptions $a$ and $b$ hold, and the latter, read ``$a$ or $b$,''  to the requirement that at least one of the descriptions $a$ or $b$ hold.
 Furthermore, the operations $\land$ and $\lor$ must distribute over one another.\footnote{I.e., $a \land (b \lor c)$ is equivalent to $(a \land b) \lor (a \land c)$ and $a \lor (b \land c)$ is equivalent to $(a \lor b) \land (a \lor c)$.}  The set $\A$ must also contain two ``trivial'' descriptions: one that is satisfied by {\em all} outcomes, and one that is  satisfied by none of them.
 Importantly,  we do {\em not} require elements of $\A$ to have negations or complements.  But  of course we allow this. 

This generality allows us to capture many different environments, where the structure of the descriptions arise from different constraints. The simplest, most easily recognizable case is when $\A$ is  a Boolean algebra, representing those descriptions that can be made in some binary-valued propositional logic. Our results also accommodate more general logical relationships between the descriptions: for example, vague descriptions that can take intermediate truth values can be captured by taking $\A$ to be an MV-algebra, or constructive descriptions that require instantiation can be captured by taking $\A$ to be a Heyting algebra. We can also capture notions of limited awareness, where the entire set of possibilities cannot be described sensibly. For example, it may be reasonable to envision the set of all \emph{apples}, but meaningless to speak of the set of \emph{non-apples}. In such cases, the negation of particular descriptions fails to exist (so that $\A$ is a general distributive lattice).

\vspace{1em}

The rest of the paper is organized as follows.  In Section \ref{S:repr}, 
we state our axioms and the representation theorem described above (Theorem \ref{thm:rep}).  With an additional axiom, we establish a unique {\em minimal} representation (Theorem \ref{thm:uniq}). In the case when $\A$ is a Boolean algebra, we obtain a {\em topological} representation (Corollary \ref{thm:cont_uniq}).
 In Section \ref{S:duality}, we introduce the  formal machinery of dual orders over spectral spaces  and state our general duality result (Theorem \ref{thm:dual}), from  which Theorem \ref{thm:rep} follows.    Finally, all proofs are in Section \ref{S:proofs}.

\section{\label{S:repr}Representation Theorem and Discussion}

\subsection{Notation}

A {\em bounded distributive lattice} is a tuple $(\A, \lor, \land, \0, \1)$, where $\A$ is a set, $\lor$ and $\land$ are binary operators and $\0$ and $\1$ are elements such that $\land$ and $\lor$ are commutative, associative, distribute over each other, and satisfy the absorption laws: $a = a \lor (a \land b)$ and $a = a \land (a \lor b)$. When it is not confusing to do so, we abuse notation and refer to the lattice by its carrier set, $\A$.

There is a canonical ordering, $\sq$, over $\A$ defined as follows: $a \sq b$ if and only if $a = a \land b$. That $\A$ is bounded requires $\0 \sq a \sq \1$ for all $a \in \A$. Note for any set $X$, $(2^X, \cup,\cap,\emptyset,X)$ forms a bounded distributive lattice.
If $(\A, \lor_{\A}, \land_{\A}, \0_{\A}, \1_{\A})$ and $(\mathbf{B}, \lor_{\mathbf{B}}, \land_{\mathbf{B}}, \0_{\mathbf{B}}, \1_{\mathbf{B}})$ are lattices, then $h: \A \to \mathbf{B}$ is a \textit{lattice homomorphism} if $h(\0_\A) = \0_{\mathbf{B}}$, $h(\1_\A) = \1_{\mathbf{B}}$ and for all $a, b \in \A$, $h(a \land_\A b) = h(a) \land_{\mathbf{B}} h(b)$, $h(a \lor_\A b) = h(a) \lor_{\mathbf{B}} h(b)$. An invertible homomorphism is called an {\em isomorphism}.

\subsection{Model}

Let $(\A, \lor, \land, \0, \1)$ be bounded distributed lattice capturing the set of \emph{descriptions}. That is, each $a\in \A$ is a description of choice outcomes that may be true or false of any given outcome. The ordering $\sq$ captures {\em specificity}: $a \sq b$ exactly when $a$ is a more specific description that $b$ (i.e., $a$ describes a further specification of the possible outcomes that $b$ describes). The primitive of the model is a complete and transitive relation $\s$ over $\A$.

Consider the following simple axioms:

\begin{axiom} 
\label{ax:upper}
For all $a,b \in \A$, if $a \sq b$ then $a \s b$.
\end{axiom}

Axiom \ref{ax:upper} states that preferences are monotone in the specificity ordering. Recall, we are considering a cautious decision maker who evaluates descriptions based on their worst-case guarantee. Since more specific descriptions reduces the set of possibilities, they will have higher lower bounds. 




\begin{axiom}
\label{ax:union}
For all $a,a',b \in \A$, if $a \succ b$ and $a' \succ b$ then $(a \lor a') \succ b$
\end{axiom}

Axiom \ref{ax:union} states that strict preference `distributes' over $\lor$. Again the motivation for this comes from the cautious nature of the decision maker.  Recall that $a \succ b$ when the agent knows that all outcomes consistent with $a$ dominate some outcome consistent with $b$.  Further, the outcomes consistent with $(a \lor a')$ are exactly the union of those outcomes consistent with each description individually. So if all $a$-outcomes dominate some $b$-outcome, and all $a'$-outcomes dominate some other $b$-outcome, then all $(a \lor a')$-outcomes dominate the least preferred of these two $b$-outcomes.

These two axioms yield considerable structure:

\begin{theorem}
\label{thm:rep}
The following are equivalent:
\begin{enumerate}[label=\textup{(\roman*)}]
\item\label{t.ax1} $\s$ satisfies Axioms \ref{ax:upper} and \ref{ax:union} 
\item\label{t.rep} There exist some $\<X, \sigma, \s^*\>$, where $X$ is a set, $\sigma: \A \to 2^X$ is a lattice homomorphism,
 and $\s^*$ is a weak ordering of $X$ such that for all $a,b \in \A$:
\begin{equation}
\label{eq:thm_rep}
\tag{$\dag$}
a \s b \quad \text{ if and only if } \quad  \forall x_a \in \sigma(a),\  \exists y_b  \in \sigma(b), \text{ such that } x_a \s^* y_b.
\end{equation}
\end{enumerate}
\end{theorem}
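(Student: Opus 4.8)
The two implications are very different in character, so the plan is to treat them separately. The direction (\ref{t.rep})$\Rightarrow$(\ref{t.ax1}) is a routine verification. Assuming \eqref{eq:thm_rep}, Axiom~\ref{ax:upper} follows because $\sigma$ is a homomorphism: $a\sq b$ means $a=a\land b$, so $\sigma(a)=\sigma(a)\cap\sigma(b)\subseteq\sigma(b)$, and then each $x_a\in\sigma(a)$ witnesses itself in $\sigma(b)$ by reflexivity of $\s^*$. For Axiom~\ref{ax:union}, I would unpack $a\succ b$ and $a'\succ b$ through \eqref{eq:thm_rep} and the completeness of $\s^*$ to produce $y,y'\in\sigma(b)$ lying strictly $\s^*$-below all of $\sigma(a)$ and all of $\sigma(a')$ respectively; since $\sigma(a\lor a')=\sigma(a)\cup\sigma(a')$, the $\s^*$-worse of $y,y'$ then sits strictly below all of $\sigma(a\lor a')$ and witnesses $a\lor a'\succ b$.

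The substantive direction is (\ref{t.ax1})$\Rightarrow$(\ref{t.rep}), and the plan is a preference-refined Stone construction. I would first prove the combinatorial core: Axioms~\ref{ax:upper}--\ref{ax:union} force the sets $\{c:c\succ a\}$, the sets $\{c:c\s b\}$, and the set $N:=\{a:a\s\0\}$ all to be ideals of $\A$. Downward-closedness comes from Axiom~\ref{ax:upper} with transitivity; join-closure of $\{c:c\succ a\}$ is precisely Axiom~\ref{ax:union}; and join-closure of $\{c:c\s b\}$ follows from the weak rule $a\s b\wedge a'\s b\Rightarrow a\lor a'\s b$, which I would derive from Axiom~\ref{ax:union} by observing that if $\neg(a\lor a'\s b)$ then $a,a'\succ a\lor a'$, whence Axiom~\ref{ax:union} gives the absurdity $a\lor a'\succ a\lor a'$ (the claim for $N$ being the case $b=\0$). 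I would then let $X$ be the prime filters of $\A$ disjoint from $N$, set $\sigma(a):=\{F\in X:a\in F\}$ --- a lattice homomorphism with $\sigma(\0)=\emptyset$ and $\sigma(\1)=X$ --- and define $F\s^* G:\Longleftrightarrow\forall c\in G\ \exists d\in F,\ d\s c$, which is the dual of \eqref{eq:thm_rep} read off at prime filters (the descriptions satisfied by $F$ being exactly $F$). Transitivity of $\s^*$ is inherited from $\s$, and completeness holds because the joint failure of $F\s^* G$ and $G\s^* F$ would produce $c\in G$ and $e\in F$ with $c\succ e$ and $e\succ c$.

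With these ideals in hand, both halves of \eqref{eq:thm_rep} reduce to the Prime Filter Theorem for bounded distributive lattices (every filter disjoint from an ideal extends to a prime filter still disjoint from that ideal). For the forward implication, given $a\s b$ and $F\in\sigma(a)$, I would separate the principal filter $\{c:b\sq c\}$ from the ideal $\{c:c\succ a\}$ (disjoint because $a\s b$, using Axiom~\ref{ax:upper} and transitivity) to obtain a prime filter $G\ni b$ with $a\s c$ for all $c\in G$; since $N\subseteq\{c:c\succ a\}$ this $G$ lies in $X$, and $a\in F$ then witnesses $F\s^* G$. For the converse, arguing contrapositively: if $b\succ a$, I would separate $\{c:a\sq c\}$ from the ideal $\{c:c\s b\}$ (disjoint because $b\succ a$) to obtain a prime filter $F\ni a$ with $F\subseteq\{d:b\succ d\}$; again $N\subseteq\{c:c\s b\}$ puts $F$ in $X$, and now for every $G\in\sigma(b)$ the element $b$ itself is strictly $\s$-above all of $F$, so $F\s^* G$ fails, contradicting the right-hand side of \eqref{eq:thm_rep}.

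The step I expect to be the crux is the passage to the ideal $N$ and the resulting restriction of the state space. The plain Stone space of all prime filters is simply the wrong object: in the degenerate case where every description is indifferent (so $\1\s\0$), \eqref{eq:thm_rep} forces $X=\emptyset$, which no nonempty prime-filter space can realize, so the null descriptions in $N$ must be stripped out. What makes this restriction harmless is exactly that $N$, $\{c:c\succ a\}$ and $\{c:c\s b\}$ are ideals, so the prime filters manufactured by the separation theorem automatically avoid $N$ on their own; and establishing that these sets are ideals is precisely where Axiom~\ref{ax:union}, through the weak join rule, is indispensable. Everything else is bookkeeping.
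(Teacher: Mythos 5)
Your proposal is correct, and it reaches the representation by a genuinely different route than the paper. (Your easy direction \ref{t.rep}$\Rightarrow$\ref{t.ax1} coincides with the paper's Lemma \ref{lem:converse}, so I focus on the hard direction.) The paper factors \ref{t.ax1}$\Rightarrow$\ref{t.rep} through two devices: a standalone duality theorem (Theorem \ref{thm:dual}), proved for preferences on $\A\setminus\{\0\}$ by applying prime-ideal separation to the strict contour ideals $\{c \mid c\succ a\}\cup\{\0\}$, and a quotient construction --- a lattice congruence $\cng'$ collapsing the null ideal $I=\{a \mid a\sim\0\}$ to $[\0]_{\cng'}$ while preserving $\s$ --- so that the duality theorem can be applied to $\sfrac{\A}{\cng'}$ and the representation is read off the full spectrum of the quotient. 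You never quotient: you take $X$ to be the prime filters of $\A$ itself that avoid $N=\{a \mid a\s\0\}$, and you verify \eqref{eq:thm_rep} directly with two applications of the relative prime filter theorem (separating $\{c \mid b\sq c\}$ from the ideal $\{c \mid c\succ a\}$ for the forward half, and $\{c \mid a\sq c\}$ from the ideal $\{c \mid c\s b\}$ for the contrapositive of the converse half). The two state spaces are canonically the same --- prime filters of $\sfrac{\A}{\cng'}$ correspond exactly to prime filters of $\A$ disjoint from $I$, since any prime filter avoiding $I$ is automatically $\cng'$-saturated --- and the engine (separation applied to preference-contour sets, which are ideals precisely because of Axioms \ref{ax:upper} and \ref{ax:union}) is identical; what differs is the packaging. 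Your version is shorter and more self-contained if Theorem \ref{thm:rep} is the only goal; the paper's congruence machinery is not wasted, however, since it powers the minimality result (Theorem \ref{thm:uniq}, via the coarser congruence $\cng''$) and lets the duality stand as an independent result. Your converse half also differs in detail: the paper derives it from its filter condition applied twice (once to $a\s a$, once, negated, to $a \centernot{\s} b$), whereas you run a second separation against $\{c \mid c\s b\}$; both are valid.

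One imprecision to repair in a full writeup: the blanket claims that $\{c \mid c\succ a\}$ is an ideal and that $N\subseteq\{c \mid c\succ a\}$ are false when $a\in N$ (then the set is empty, since Axiom \ref{ax:upper} gives $\0\s c$ for every $c$, so nothing is strictly worse than $\0\sim a$; and an empty set is not an ideal under the paper's convention). This is harmless where you use it: the forward separation is invoked only when some $F\in\sigma(a)$ exists, which forces $a\notin N$ (because $a\in F$ and $F\cap N=\emptyset$), and then $\0\succ a$ by completeness, so $\0$ witnesses non-emptiness of the ideal and every $c\sim\0$ satisfies $c\succ a$. Say this explicitly and the argument is airtight.
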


The implication \ref{t.rep} $\implies$ \ref{t.ax} is proven directly in Lemma \ref{lem:converse}; the other direction is the consequence of our duality result in the next section.


An additional axiom implies the existence of a unique minimal set $X$ of alternatives from which such a representation can arise. The axiom states that if $a$ and $a'$ are trivialized by the same set of further specifications, then they are indifferent (that is, for each $b$,  $a \land b$ is treated by the agent as if it is impossible iff $a' \land b$ is as well). 


\begin{axiom}
\label{ax:cong}
For $a,a' \in \A$ such that  $a \land b \sim \0 $ iff $a' \land b \sim \0$ for all $b \in \A$, we have $a' \sim a$.
\end{axiom}


This axiom requires that differences in preferences can be accounted for by the set descriptions deemed impossible, those that are indifferent to $\0$. While Axiom \ref{ax:cong} has a technical character, it should be pointed out that it is not particularly strong, and indeed, is implied by the other axioms whenever $\A$ has complements.

\begin{remark}
If $\A$ has relative complements and $\s$ satisfies Axioms \ref{ax:upper} and \ref{ax:union}, then $\s$ satisfies Axiom \ref{ax:cong}.
\end{remark}

\begin{proof}
Assume that $\{ b \in \A \mid a \land b \sim \0\} = \{ b \in \A \mid a' \land b \sim \0\}$. 
 Denote by $c$ the relative complement of $a$ with respect to $a'$: that is, $a \lor c = a \lor a'$ and $a \land c = \0$.  Then $a' \land c=c $.\footnote{{\em Proof.}  $a'\land c = \0 \lor(a'\land c) =_* (a\land c)\lor (a'\land c)
=(a\lor a')\land c=_* (a\lor c)\land c=(a\land c)\lor (c\land c)=_*\0\land c=c$,
where each $*$ equality is by the definition of relative complement.}

Since $a \land c \sim \0$, the assumption implies $a' \land c \sim \0$;
 in other words $c\sim \0$. 
 By Axiom \ref{ax:upper}, $a' \s a \lor a' = a \lor c$. Thus the contra-positive of Axiom \ref{ax:union} implies $a' \s a$ or $a' \s c$. Since $c \sim \0 \s a$, by Axiom \ref{ax:upper}, we can conclude that $a' \s a$. A symmetric argument yields $a \s a'$.
\end{proof}

With this axiom, we can establish a minimal representation:

\begin{theorem}
\label{thm:uniq}
If $\s$ satisfies Axioms \ref{ax:upper}, \ref{ax:union} and \ref{ax:cong}, then there exists a unique (up to isomorphism) minimal representation, $\<X, \sigma, \s^*\>$. Specifically, if $\<\hat X, \hat \sigma, \hat \s\>$ is any other representation then there exists a (surjective) homomorphism $h: \im(\hat\sigma) \to \im(\sigma)$ such that $\sigma = h \circ \hat\sigma$.
 \end{theorem}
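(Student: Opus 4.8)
The plan is to identify a single lattice congruence on $\A$, readable from $\s$ alone, whose classes are exactly the fibers of the minimal $\sigma$, while the fibers of every other representation merely refine it. The map $h$ is then the canonical quotient between image lattices, and uniqueness is formal.

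First I would record that impossibility is observable. Fix any representation $\<\hat X,\hat\sigma,\hat\s\>$. Setting $b=\0$ in \eqref{eq:thm_rep} and using $\hat\sigma(\0)=\emptyset$ gives $a\s\0$ iff $\hat\sigma(a)=\emptyset$; since $\0\sq a$ forces $\0\s a$ by Axiom \ref{ax:upper}, this sharpens to $a\sim\0\iff\hat\sigma(a)=\emptyset$. As $\hat\sigma$ is a homomorphism, $\hat\sigma(a\land b)=\hat\sigma(a)\cap\hat\sigma(b)$ and $\hat\sigma(a\lor b)=\hat\sigma(a)\cup\hat\sigma(b)$, so $a\land b\sim\0\iff\hat\sigma(a)\cap\hat\sigma(b)=\emptyset$, and $u\lor v\sim\0$ iff both $u\sim\0$ and $v\sim\0$.

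Now define $R$ on $\A$ by $a\mathrel{R}a'$ iff $a\land b\sim\0\Leftrightarrow a'\land b\sim\0$ for all $b$. I would check $R$ is a lattice congruence: compatibility with $\land$ uses associativity of $\land$, and compatibility with $\lor$ uses distributivity together with the identity $u\lor v\sim\0\iff(u\sim\0\text{ and }v\sim\0)$ from the previous step (applied to $u=a\land b$, $v=c\land b$). The observable-impossibility calculation shows that whenever $\hat\sigma(a)=\hat\sigma(a')$ one has $a\mathrel{R}a'$ (equal sets meet every $\hat\sigma(b)$ in the same way), so the fiber congruence of any representation refines $R$. The decisive point is that Axiom \ref{ax:cong} says precisely that $a\mathrel{R}a'$ implies $a\sim a'$, i.e.\ that $\s$ is constant on $R$-classes.

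I would then build the minimal representation on $\A_0:=\A/R$. Writing $q\colon\A\to\A_0$ and setting $[a]\s_0[b]$ iff $a\s b$, Axiom \ref{ax:cong} makes $\s_0$ well defined, and $\s_0$ inherits Axioms \ref{ax:upper} and \ref{ax:union} (for Axiom \ref{ax:upper} note $[a]\sq[b]$ means $a\mathrel{R}(a\land b)$, whence $a\sim a\land b\s b$; Axiom \ref{ax:union} transfers verbatim since $q$ preserves $\lor$). Applying Theorem \ref{thm:rep} to $(\A_0,\s_0)$---equivalently, feeding it into the duality of Theorem \ref{thm:dual}---yields a representation $\<X,\sigma_0,\s^*\>$; by the refinement observation applied inside $\A_0$, whose impossibility congruence is trivial, $\sigma_0$ is injective. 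Then $\sigma:=\sigma_0\circ q$ represents $(\A,\s)$ and has fiber congruence exactly $R$. Minimality is now immediate: for any $\<\hat X,\hat\sigma,\hat\s\>$ the rule $h(\hat\sigma(a)):=\sigma(a)$ is well defined because $\hat\sigma$'s fibers refine $R$, is a surjective lattice homomorphism onto $\im(\sigma)$, and satisfies $\sigma=h\circ\hat\sigma$; running the argument between two minimal representations gives mutually inverse homomorphisms, so $\im(\sigma)$ is determined up to isomorphism. The main obstacle is this last construction---verifying that $R$ is genuinely a congruence and that $\s_0$ inherits both axioms so that Theorem \ref{thm:rep} applies to the quotient; once the canonical $\s^*$ on the state space is supplied by the duality machinery of Section \ref{S:duality}, the remainder is bookkeeping about fibers.
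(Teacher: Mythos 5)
Your proposal is correct and follows essentially the same route as the paper: your relation $R$ is exactly the paper's congruence $\cng''$ (defined by $a \cng'' b$ iff $\{c \mid a \land c \sim \0\} = \{c \mid b \land c \sim \0\}$), Axiom \ref{ax:cong} is used identically to make the quotient order well defined, the representation is obtained by passing $\A/R$ through Theorem \ref{thm:rep}/the duality machinery, and minimality is established by the same kernel-refinement argument (the paper phrases it via the isomorphism theorems, you construct $h$ on images directly). The only cosmetic differences are that you verify the congruence property by hand where the paper cites a reference, and you make the injectivity of $\sigma_0$ and the mutual-inverse uniqueness step explicit.
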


To see how the representation $\<X, \sigma, \s\>$ is minimal:  assume that for some other representation $\<\hat X, \hat \sigma, \hat \s\>$, two descriptions $a,b \in \A$ turn out to be equivalent, in the sense that $\hat\sigma(a) = \hat\sigma(b)$. This is a simplification as it allows the modeler to exclude superfluous objects that might satisfy $a$ but not $b$ or vice-versa. In such a case, since $\sigma = h \circ \hat\sigma$, it must be that $\sigma(a) = \sigma(b)$ as well. So any simplification that can be made in any representation is made in the minimal one.

For the special case of complete Boolean algebras, this minimality condition is sharpened considerably.\footnote{Recall that a lattice $\A$ is a {\em Boolean algebra} if it has a ``negation'' operator $\neg$ such that (i) 
$\neg (a\land b)=(\neg a)\lor(\neg b)$ and 
$\neg (a\lor b)=(\neg a)\land(\neg b)$ for all $a,b\in\A$ and (ii) 
$\neg\neg a=a$ for all $a\in\A$. \newline\mbox{}\quad  A Boolean algebra $\A$ is {\em complete} if any subset of $\B\subseteq\A$ has a 
{\em supremum} (a unique minimal  $a\in\A$ such that $b \sq a$ for all $b \in\B$) and an {\em infimum} (a unique maximal  $a\in\A$ such that $a \sq b$ for all $b \in\B$).}
 Call $\<X,\tau, \sigma, \s\>$ a \emph{topological representation} if $(X,\tau)$ is a Stone space\footnote{That is: $X$ is compact, Hausdorff and totally disconnected.} and $\sigma$ is clopen-valued (i.e., $\sigma(a)$ is clopen for all $a \in \A$). By focusing on topological representations of complete Boolean algebras we can recast minimality in terms of the objects themselves:

\begin{corollary}
\label{thm:cont_uniq}
Let $\A$ be a complete Boolean algebra. If $\s$ satisfies Axioms \ref{ax:upper} and \ref{ax:union}, and $\{a \in \A \mid a \sim \0\}$ has a maximal element, then there exists a unique (up to isomorphism) minimal topological representation, $\<X, \tau, \sigma, \s^*\>$. Specifically, if $\<\hat X, \hat\tau, \hat \sigma, \hat \s\>$ is any other topological representation then there exists a continuous injection $i: X \to \hat X$ such that $\sigma = i^{-1} \circ \hat\sigma$.
 \end{corollary}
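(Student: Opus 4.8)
The plan is to reduce the statement to classical Stone duality for Boolean algebras, with the completeness hypothesis entering through Gleason's theorem on projective compact Hausdorff spaces. First I would record a fact valid in \emph{every} representation $\<X,\sigma,\s^*\>$ satisfying \eqref{eq:thm_rep}: $a\sim\0$ if and only if $\sigma(a)=\emptyset$. Indeed, since $\sigma(\0)=\emptyset$ the clause $\0\s a$ holds vacuously, while $a\s\0$ holds exactly when there is no $x_a\in\sigma(a)$ to witness the universal quantifier. Hence $\{a\in\A : a\sim\0\}$ equals $\ker\sigma:=\sigma^{-1}(\emptyset)$, which is an ideal of $\A$ because $\sigma$ is a Boolean homomorphism. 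The hypothesis that this set has a maximal element $n$ therefore says precisely that the ideal is principal, $\ker\sigma=\{a : a\sq n\}$. Since $\A$ is a complete Boolean algebra, the relative algebra $\B:=\A\!\upharpoonright\!\neg n=\{a : a\sq\neg n\}\cong\A/\ker\sigma$ (via $a\mapsto a\land\neg n$) is again complete, and $\sigma$ factors through an injection of $\B$ into $2^X$.

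Next I would exhibit the minimal topological representation explicitly as the Stone dual of $\B$. Take $X$ to be the Stone space of $\B$ — equivalently, the set of ultrafilters of $\A$ avoiding $n$ — topologized by declaring the sets $\sigma(a):=\{U\in X : a\in U\}$ clopen; by Stone duality $(X,\tau)$ is a Stone space and $\sigma:\A\to\mathrm{Clopen}(X)$ is a \emph{surjective} Boolean homomorphism with kernel $(n)$. Defining $\s^*$ by the dual condition ($x\s^* y$ iff for every $b$ with $y\in\sigma(b)$ there is $a$ with $x\in\sigma(a)$ and $a\s b$) and invoking Theorem \ref{thm:dual}, one checks that $\<X,\tau,\sigma,\s^*\>$ satisfies \eqref{eq:thm_rep} and hence is a topological representation; by the Remark, Axioms \ref{ax:upper} and \ref{ax:union} already force Axiom \ref{ax:cong} (as $\A$ has relative complements), so Theorem \ref{thm:uniq} applies and this is the minimal representation. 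The decisive structural point is that, because $\B$ is complete, $X$ is not merely a Stone space but an \emph{extremally disconnected} one.

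Now for the universal property. Let $\<\hat X,\hat\tau,\hat\sigma,\hat\s\>$ be an arbitrary topological representation and define $q:\hat X\to X$ by sending $\hat x$ to its type $\{a\in\A : \hat x\in\hat\sigma(a)\}$. Since $\hat\sigma$ is a Boolean homomorphism and evaluation at $\hat x$ is a homomorphism onto $\{0,1\}$, this type is an ultrafilter of $\A$; it avoids $n$ because $\hat\sigma(n)=\emptyset$, so $q$ lands in $X$. The identity $q^{-1}(\sigma(a))=\hat\sigma(a)$ is immediate, yielding both continuity of $q$ and the compatibility $\hat\sigma=q^{-1}\circ\sigma$. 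For surjectivity of $q$ I would use a finite-intersection/compactness argument: given $U\in X$, the family $\{\hat\sigma(a) : a\in U\}\cup\{\hat X\setminus\hat\sigma(b) : b\notin U\}$ has the finite intersection property, because a finite meet of members of $U$ with complements of non-members again lies in $U$ and so is not $\sq n$, whence its image under $\hat\sigma$ is nonempty; compactness of $\hat X$ then produces $\hat x$ with $q(\hat x)=U$.

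The crux — and where completeness is indispensable — is upgrading the surjection $q$ to a continuous section. Here I would invoke Gleason's theorem: an extremally disconnected compact Hausdorff space is a projective object in the category of compact Hausdorff spaces. Applying projectivity of $X$ to the surjection $q:\hat X\to X$ and the map $\mathrm{id}_X:X\to X$ yields a continuous $i:X\to\hat X$ with $q\circ i=\mathrm{id}_X$. Then $i$ is injective (it has a left inverse), and $i^{-1}\circ\hat\sigma=i^{-1}\circ q^{-1}\circ\sigma=(q\circ i)^{-1}\circ\sigma=\sigma$, exactly as required. Uniqueness up to isomorphism follows because any minimal topological representation must have $\sigma$ surjective onto its clopen algebra (otherwise superfluous points could be collapsed), which Stone duality identifies with $\B$, pinning down $X$ up to homeomorphism and $\s^*$ through the dual formula. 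I expect the main obstacle to be precisely this last step: recognizing that completeness of $\A$ is equivalent to extremal disconnectedness (projectivity) of $X$, which is what converts the easy abstract surjection $q$ into an honest continuous injection $i$ running in the required direction.
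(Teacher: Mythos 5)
Your proof is correct, and it reaches the required continuous injection by a genuinely different (dual) route from the paper's. The paper stays on the algebraic side of Stone duality: it takes the surjective homomorphism $h\colon \im(\hat\sigma)\to\CO(X)$ supplied by Theorem \ref{thm:uniq}, notes that $\CO(X)\cong\sfrac{\A}{\cng''}$ is a \emph{complete} Boolean algebra (the ideal $[\0]_{\cng''}$ is principal by the maximal-element hypothesis, and quotients of complete algebras by principal ideals are complete), extends $h$ to all of $\CO(\hat X)$ by the Sikorski extension theorem --- i.e.\ by injectivity of complete Boolean algebras in the category of Boolean algebras --- and only at the last step dualizes the extension $h^*$ into the continuous injection $i\colon X\to\hat X$. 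You work on the topological side instead: you construct the point map $q\colon\hat X\to X$ directly, prove its surjectivity by a finite-intersection/compactness argument, and lift $\mathrm{id}_X$ through $q$ via Gleason's theorem (projectivity of extremally disconnected compacta), with completeness entering through the Stone--Nakano correspondence ($\B$ complete iff $X$ extremally disconnected). Since Sikorski injectivity and Gleason projectivity are Stone duals of one another, the two arguments invoke the completeness-plus-principality hypotheses at exactly the same spot and are of the same depth; what yours buys is a concrete, point-level picture of the comparison map and an explicit surjectivity proof (rather than outsourcing the surjection to Theorem \ref{thm:uniq}), while the paper's buys never having to manipulate points at all and keeps the whole argument functorial. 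Two minor observations: your opening fact, that $\hat\sigma^{-1}(\emptyset)=\{a\in\A \mid a\sim\0\}$ in \emph{every} representation, is also what the paper implicitly relies on when it asserts $[\0]_{\cng''}$ is principal; and your closing uniqueness paragraph is the sketchiest step --- it does go through, because in the minimal representation $\im(\sigma)=\CO(X)$ separates points of the Stone space $X$, so the two injections obtained from mutual minimality compose to identities --- but the paper is equally terse on that point, so this is not a gap relative to its own standard of rigor.
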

 
 \noindent {\sc Remark.} As noted earlier, one interpretation of the ``maximin'' formula (\ref{eq:thm_rep})  is of an agent who chooses a description $a$ in $\A$ expecting a hostile ``adversary'' to then choose an outcome in $X$ satisfying $a$.  An alternative scenario, where the agent expects that she herself (or an ally) will choose an outcome satisfying $a$, would yield a ``maximax'' representation:
$a$ is better than $b$ if, for every outcome $x_b$ satisfying $b$, there is some outcome $x_a$ satisfying $a$ such that $x_a \s x_b$. 
It is easy to prove  versions of Theorems \ref{thm:rep} and \ref{thm:uniq} and Corollary
\ref{thm:cont_uniq} that
axiomatically characterize this alternative representation by simply ``reversing the polarity'' of our three axioms.
The interpretation here  parallels the \emph{menu choice} literature {\em \`a la} Kreps \cite{kreps1979representation}; the reversed version of Axiom \ref{ax:upper} take the place of Kreps' \emph{preference for flexibility} axiom.

\section{\label{S:duality}A General Duality Result}

In this section, we present the main technical result of the paper. We show that the two Axioms presented above ensure a tight duality between preference orders over a distributive lattice $\A$ and orders over
a canonical set of objects called its \emph{spectral space}.  
This duality immediately proves Theorem \ref{thm:rep} by providing a canonical representation. To state our result, we must first introduce the spectrum of a lattice.

A subset $F \subseteq \A$ is called a \emph{filter} if 
is non-empty, $\sq$-upwards-closed, and for any $a,b\in \A$ if $a \in F$ and $b \in F$, then $a \land b \in F$.
A filter is called \emph{proper} if $F \neq \A$ (equivalently, $\0 \notin F$). A filter $F$ is called \emph{prime}
if $F$ is proper, and, $a \lor b \in F$ implies $a \in F$ or $b \in F$.

The {\em spectrum} of $\A$ is the set
  $X_\A = \{F \subset \A \mid F \text{ is a prime filter}\}$.   For all $a\in \A$,
  let $\sig{\A}(a) $ be the set of  prime filters  containing $a$; formally,
  $\sig{\A}(a) := \{F \in X_\A \mid a \in F\}$.  This defines a function
  $\sig{\A}: \A \to 2^{X_\A}$.  The image of $\sigma_\A$ acts as a base for a topology on $X_\A$ called the \emph{spectral topology}.  Endowed with this topology, $X_\A$ is called the {\em spectral space} of $\A$.  
  Let $\KO(X_\A)$ be the set of all compact-open subsets of $X_\A$.
  Then $\KO(X_\A)$ is a lattice under union and intersection, and the function
  $\sigma_\A$ is a lattice isomorphism between $\A$ and $\KO(X_\A)$
  \cite[Corollary II.3.4, p.66]{Johnstone82}.\footnote{Note that in Johnstone's \cite{Johnstone82} definition, ``lattices'' are always bounded; see \S I.1.4, pp.2-3.} 
When $\A$ is a Boolean algebra, the spectral topology will be Hausdorff, and compact-open sets coincide with clopen sets; this is not in general true, as the following example shows:

\begin{example}
Take $\A = \{0,\frac12, 1\}$ endowed with the natural ordering. There are two prime filters: $x = \{1\}$ and $y =\{\frac12,1\}$, so the spectrum of this lattice is $X_\A = \{x,y\}$ with topology $\{\emptyset, \{x\}, \{x,y\}\}$. Notice this topology is not Hausdorff; consequently, compact sets need not be closed. In particular, the set $\{x\}$ is compact (since any open cover trivially has a finite sub-cover) and open, but it is not closed, as it's complement is not contained in the topology.
\end{example}


For technical reasons, we here consider a preference relation over $\A \setminus \{\0\}$, that is, excluding the bottom element. 
As evidenced by the proof of Theorem \ref{thm:rep}, this assumption is without loss of generality in applications: it is always possible to identify the set of elements indifferent to $\0$ and apply our duality result to the resulting quotient lattice to obtain a faithful representation.

Given a distributive lattice $(\A, \lor, \land, \0, \1)$  and a weak order $\s$ over $\A \setminus \{\0\}$, we can define a relation $\ssr$ over $X_\A$ as follows
\begin{equation}
\label{eq:daulOrd1}
\tag{$\rightarrow$}
F \ssr G \text{ if and only if } \forall b \in G, \exists a \in F, a \s b
\end{equation}
and likewise, from $\s$ over $X_\A$, we can define a relation $\ssl$ over $\A \setminus \{\0\}$ as follows
\begin{equation}
\label{eq:daulOrd2}
\tag{$\leftarrow$}
a \ssl b \text{ if and only if } \forall F \in \sig{\A}(a), \exists G  \in \sig{\A}(b), F \s G
\end{equation}

We are interested in ensuring that $\s$ coincides with $\sx$, that is, that no information is lost from the process of moving back and forth between the lattice $\A$ and its spectral representation $X_\A$.
To better understand this requirement, notice that  $a \sx b$, is equivalent to, by plugging  \eqref{eq:daulOrd1}  into \eqref{eq:daulOrd2},
 \begin{equation}
 \label{eq:needed1}
 \tag{$\leftrightarrows$}
\forall F \in \sig{\A}(a), \exists G \in \sig{\A}(b) \ \  \text{ such that } \ \ \forall b' \in G, \exists a' \in F  \ \ \text{ such that } \ \ a' \s b'.
\end{equation}



\begin{theorem}
\label{thm:dual}
The following are equivalent:
\begin{enumerate}[label=\textup{(\roman*)}]
\item\label{t.ax} $\s$ satisfies Axioms \ref{ax:upper} and \ref{ax:union}.
\item\label{t.filt} 
$a \s b$ if and only if there exists some $G \in \sig{\A}(b)$ such that $a \s b'$ for all $b' \in G$.
\item\label{t.dual} $a \s b$ if and only if  $a \sx b$.
\end{enumerate}
\end{theorem}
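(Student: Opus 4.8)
The plan is to establish the cycle \ref{t.ax} $\Rightarrow$ \ref{t.filt} $\Rightarrow$ \ref{t.dual} $\Rightarrow$ \ref{t.ax}; the engine throughout is the prime filter separation theorem for bounded distributive lattices (if an ideal $I$ and a filter $F$ are disjoint, some prime filter extends $F$ and avoids $I$). First observe that the ``only if'' half of \ref{t.filt} is free: if $G \in \sig{\A}(b)$ witnesses $a \s b'$ for all $b' \in G$, then $b \in G$ already gives $a \s b$. Hence the content of \ref{t.filt} is its forward implication, and this is where the axioms and separation do their work.

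For \ref{t.ax} $\Rightarrow$ \ref{t.filt}, fix $a$ and put $U_a := \{c \mid c \succ a\}$. The crux is that $U_a$ is an ideal: it is $\sq$-downward closed by Axiom \ref{ax:upper} (if $c \succ a$ and $d \sq c$ then $d \s c$, so $d \succ a$ by transitivity), and it is closed under $\lor$ by Axiom \ref{ax:union} (this is exactly that axiom with $c, c'$ in place of $a, a'$ and $a$ in place of $b$). Now if $a \s b$ then $b \notin U_a$, and moreover ${\uparrow}b$ is disjoint from $U_a$, since any $c \sqsupseteq b$ has $b \s c$ by Axiom \ref{ax:upper}, whence $a \s b \s c$ keeps $c$ out of $U_a$. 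Separation then produces a prime filter $G \supseteq {\uparrow}b$ (so $b \in G$) with $G \cap U_a = \emptyset$; by completeness this says precisely $a \s b'$ for every $b' \in G$, the required witness. (When $U_a = \emptyset$ take any prime filter through $b$, which exists as $b \neq \0$.) I expect this to be the main obstacle, as it is the one genuinely non-elementary, lattice-theoretic input.

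For \ref{t.filt} $\Rightarrow$ \ref{t.dual}, recall from \ref{eq:needed1} that $a \sx b$ means: for every $F \in \sig{\A}(a)$ there is $G \in \sig{\A}(b)$ with $F \ssr G$. The forward direction is quick: given $a \s b$, take the witness $G_0$ from \ref{t.filt}; since $a \in F$ for every $F \in \sig{\A}(a)$, the element $a$ itself certifies $F \ssr G_0$, so $a \sx b$. For the converse, apply \ref{t.filt} to the reflexive instance $a \s a$ to get $F_0 \in \sig{\A}(a)$ with $a \s a''$ for all $a'' \in F_0$; feeding this $F_0$ into $a \sx b$ yields $G \in \sig{\A}(b)$ with $F_0 \ssr G$, and since $b \in G$ there is $a'' \in F_0$ with $a'' \s b$, so $a \s a'' \s b$ gives $a \s b$ by transitivity.

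For \ref{t.dual} $\Rightarrow$ \ref{t.ax}, with $\s$ now equal to $\sx$ it suffices to check the two axioms for $\sx$. Axiom \ref{ax:upper} is immediate: $a \sq b$ gives $\sig{\A}(a) \subseteq \sig{\A}(b)$, so each $F \in \sig{\A}(a)$ may take $G = F$, and $F \ssr F$ holds by reflexivity of $\s$. For Axiom \ref{ax:union} the governing identity is $\sig{\A}(a \lor a') = \sig{\A}(a) \cup \sig{\A}(a')$ (primeness). The weak comparison $a \lor a' \s b$ drops out by treating each $F \in \sig{\A}(a \lor a')$ according to whether it lies in $\sig{\A}(a)$ or $\sig{\A}(a')$; the delicate part is the strict conclusion, for which one must exhibit a single prime filter through $b$ that defeats all of $\sig{\A}(a)$ and all of $\sig{\A}(a')$ at once. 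I would secure this by running the separation argument of the first step on the spectral side — i.e. invoking the symmetric, $\ssr$-version of Axioms \ref{ax:upper}--\ref{ax:union} over $X_\A$ provided by the ambient duality between $\A$ and its spectrum. Equivalently, one can bypass the subtlety by closing the cycle through \ref{t.filt} $\Rightarrow$ \ref{t.ax} directly: Axiom \ref{ax:upper} follows from \ref{t.filt} with reflexivity, while the strict form of Axiom \ref{ax:union} follows cleanly from \ref{t.filt}, because if $b \s (a \lor a')$ the single witnessing prime filter $H \ni a \lor a'$ must, by primeness, contain $a$ or $a'$, forcing $b \s a$ or $b \s a'$ and contradicting $a \succ b$ and $a' \succ b$.
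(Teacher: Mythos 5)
Your first two implications are sound and essentially the paper's own argument. For \ref{t.ax} $\Rightarrow$ \ref{t.filt} you apply prime filter--ideal separation to the disjoint pair $({\uparrow}b,\ U_a)$, where the paper instead invokes the equivalent fact that every ideal (there, $\suc{a}$) is the intersection of the prime ideals containing it, and then takes complements; these are the same use of the prime ideal theorem for distributive lattices, and your handling of the degenerate case $U_a=\emptyset$ and of $b\neq\0$ is correct. For \ref{t.filt} $\Rightarrow$ \ref{t.dual}, your converse is a direct argument (feed the witness $F_0$ obtained from $a\s a$ into $a\sx b$, then use $b\in G$ and transitivity), where the paper argues by contrapositive; both are fine and yours is arguably cleaner.

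The gap is in closing the cycle, i.e.\ in \ref{t.dual} $\Rightarrow$ \ref{t.ax}. Your primary suggestion --- ``running the separation argument on the spectral side'' by ``invoking the symmetric, $\ssr$-version of Axioms \ref{ax:upper}--\ref{ax:union} over $X_\A$ provided by the ambient duality'' --- is not an argument: $X_\A$ carries no lattice structure, so the axioms are not even statements about $\ssr$, and separation is not the relevant tool at this step. What is actually needed (and what the paper uses, via Lemma \ref{lem:converse} applied with $X=X_\A$, $\st$ taken to be $\ssr$, and $\sigma=\sig{\A}$) is the observation that $\ssr$ is complete and transitive whenever $\s$ is: completeness holds because if $F\ssr G$ fails, then some $b\in G$ satisfies $b\succ a$ for every $a\in F$, and this same $b$ certifies $G\ssr F$; transitivity follows by chaining witnesses. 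Granting this, the strict part of Axiom \ref{ax:union} for $\sx$ is exactly the Lemma \ref{lem:converse} argument: unpack $a\succ b$ and $a'\succ b$ into witness filters $G,G'\in\sig{\A}(b)$ that every member of $\sig{\A}(a)$, respectively $\sig{\A}(a')$, strictly $\ssr$-dominates, and take the $\ssr$-worse of $G$ and $G'$ as the single filter defeating all of $\sig{\A}(a)\cup\sig{\A}(a')=\sig{\A}(a\lor a')$. Your fallback does not repair the hole: the implication \ref{t.filt} $\Rightarrow$ \ref{t.ax} you sketch (primeness of the witness filter $H\ni a\lor a'$ forces $a\in H$ or $a'\in H$) is correct and elegant on its own, but substituting it for \ref{t.dual} $\Rightarrow$ \ref{t.ax} leaves \ref{t.dual} with no outgoing implication: you would then have \ref{t.ax} $\Leftrightarrow$ \ref{t.filt} and \ref{t.filt} $\Rightarrow$ \ref{t.dual}, but nothing showing that \ref{t.dual} implies the other two, so the three-way equivalence is not established. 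Either finish your primary route with the completeness/transitivity observation above, or keep the bypass and prove an implication out of \ref{t.dual} separately.
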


\section{\label{S:proofs}Proofs}

\subsection{Preliminaries}

Ideals are order-dual to filters: Specifically is subset $I \subseteq \A$ is called an \emph{ideal} if 
is non-empty, a down-set, and for any $a,b\in \A$ if $a \in I$ and $b \in I$, then $a \lor b \in I$.
A ideal is called \emph{proper} if $I \neq \A$, and called \emph{prime}
if it is proper, and, $a \land b \in I$ implies $a \in I$ or $b \in I$.

A \textit{lattice congruence} on $\A$ is an equivalence relation $\cng$ on $\A$ such that for all $a, b, a',b' \in \A$:
If $a \cng b$ and $a' \cng b'$, then $(a \land a') \cng (b \land b')$ and $(a \lor a') \cng (b \lor b')$. Let $[a]_{\cng}$ denote the equivalence class containing $a$. The set of $\cng$-equivalence classes forms a distributive lattice under the inherited operations, denoted $\sfrac{\A}{\cng}$. 

For a homomorphism $h: \A \to \hat\A$, let $\ker h = \{(a,b) \in \A \times \A \mid h(a) = h(b)\}$ denote its kernel.  (The binary relation defined by $\ker h$  is clearly an equivalence relation on $\A$.) 
The map $h_{\cng}: a \mapsto [a]_{\cng}$ is a homomorphism with kernel equal to $\cng$ \cite[Lemma 6.8, p.133]{davey2002introduction}.
Moreover, for any homomorphism $h$, $\ker h$ is a lattice congruence and $\sfrac{\A}{\ker h}$ is isomorphic to $\im(h)$ \cite[Theorem 6.9, pp.133-134]{davey2002introduction}.
\begin{lemma}
\label{lem:converse}
Let $X$ be a set and $\st$ a complete and transitive relation on $X$. Then for any distributive lattice $\A$ and lattice-homomorphism $\sigma: \A \to 2^X$ the relation $\stx$ defined by 
\begin{equation}
\label{eq:thm_rep_proof}
\tag{$\dag$}
a \stx b \quad \text{ if and only if } \quad  \forall x \in \sigma(a),\  \exists y  \in \sigma(b), \text{ such that } x \st y,
\end{equation}
satisfies Axioms \ref{ax:upper} and \ref{ax:union}.  
\end{lemma}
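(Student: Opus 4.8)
The plan is to read $\stx$ as the ``inverse maximin'' order that $\st$ induces on $\A$ through $\sigma$, and to verify the two axioms by direct quantifier bookkeeping, using only that $\st$ is a weak order and that $\sigma$ is a lattice homomorphism. (In passing one also checks, though it is not needed here, that $\stx$ is itself complete and transitive.) Two preliminary observations carry most of the weight. First, since $\sigma$ is a homomorphism, $a \sq b$---that is, $a = a \land b$---gives $\sigma(a) = \sigma(a) \cap \sigma(b) \subseteq \sigma(b)$, while $\sigma(a \lor a') = \sigma(a) \cup \sigma(a')$. Second, I would record a clean description of the strict part $\succtx$: expanding $a \succtx b$ as $a \stx b$ together with $\neg(b \stx a)$, and using completeness of $\st$ to rewrite $\neg(y \st x)$ as $x \succt y$, one finds that $a \succtx b$ holds exactly when there is a witness $y_0 \in \sigma(b)$ with $x \succt y_0$ for every $x \in \sigma(a)$. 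This matches the strict-preference description stated in the introduction.

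Axiom~\ref{ax:upper} then follows immediately. If $a \sq b$ then $\sigma(a) \subseteq \sigma(b)$ by the first observation; given $x \in \sigma(a)$, reflexivity of $\st$ (a consequence of completeness) gives $x \st x$, and $x$ itself, now viewed as an element of $\sigma(b)$, is the witness required by the definition of $\stx$. Hence $a \stx b$.

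For Axiom~\ref{ax:union}, assume $a \succtx b$ and $a' \succtx b$. Using the strict-part description, fix witnesses $y_0 \in \sigma(b)$ with $x \succt y_0$ for all $x \in \sigma(a)$, and $y_1 \in \sigma(b)$ with $x' \succt y_1$ for all $x' \in \sigma(a')$. Let $y^*$ be whichever of $y_0,y_1$ is $\st$-worse, so that $y^* \in \sigma(b)$ and both $y_0 \st y^*$ and $y_1 \st y^*$ (this uses completeness of $\st$). For any $z \in \sigma(a \lor a') = \sigma(a) \cup \sigma(a')$ we have either $z \succt y_0 \st y^*$ (if $z \in \sigma(a)$) or $z \succt y_1 \st y^*$ (if $z \in \sigma(a')$); in both cases the mixed transitivity of a weak order---a strict step followed by a weak step is strict---yields $z \succt y^*$. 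Thus $y^* \in \sigma(b)$ witnesses $(a \lor a') \succtx b$.

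I expect the only genuinely delicate point to be the second preliminary observation, namely converting the asymmetric part $\succtx$ into the existential witness form. This is where the alternating quantifiers must be handled with care, and it is precisely where completeness of $\st$ is needed---to turn the failure of $b \stx a$ into the existence of a single $y_0 \in \sigma(b)$ strictly dominated by all of $\sigma(a)$. Once that reformulation is secured, both axioms reduce to the short arguments above.
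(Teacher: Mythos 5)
Your proof is correct and takes essentially the same approach as the paper's: Axiom \ref{ax:upper} follows from $\sigma(a)\subseteq\sigma(b)$ plus reflexivity of $\st$, and Axiom \ref{ax:union} follows by taking the $\st$-worse of the two witnesses $y_0,y_1\in\sigma(b)$ and applying mixed transitivity. The only difference is presentational: you spell out the equivalence between $a \succtx b$ and the existential-witness form $\exists y_0\in\sigma(b),\ \forall x\in\sigma(a),\ x\succt y_0$ (correctly noting it rests on completeness of $\st$), whereas the paper treats this conversion as immediate from \eqref{eq:thm_rep_proof}.
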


\begin{proof}
Let $a,b \in \A$, with $a \sq b$.  If $x \in \sig{\A}(a)$, then also $x \in \sig{\A}(b)$ because  $\sig{\A}(b)\subseteq\sig{\A}(b)$;  by reflexivity $x \st x$ and so from \eqref{eq:thm_rep_proof} we have $a \stx b$, as needed for Axiom \ref{ax:upper}.

Denote by $\succt$ and $\succtx$ the strict components of $\st$ and $\stx$, respectively. 
For some $a,a',b \in \A$, assume $a \succtx b$ and $a' \succtx b$; from \eqref{eq:thm_rep_proof} this implies
\begin{align}
\label{eq:sqneed1}&\exists y \in \sig{\A}(b), \forall x \in \sig{\A}(a): \quad x \succt y, && \text{ and,} \\
\label{eq:sqneed2}&\exists y' \in \sig{\A}(b), \forall x' \in \sig{\A}(a'): \quad x' \succt y'.
\end{align}
Denote by $y''$ the $\st$-worse element between $y$ and $y'$. Now, let $x'' \in \sig{\A}(a \lor a') = \sig{\A}(a) \cup \sig{\A}(a')$. Thus, either $x'' \in \sig{\A}(a)$, in which case $x'' \succt y \st y''$ by \eqref{eq:sqneed1}, or $x'' \in \sig{\A}(a')$,  in which case $x'' \succt y' \st y''$ by \eqref{eq:sqneed2}. Since $x''$ was arbitrary, we have established that $\exists y'' \in \sig{\A}(b), \forall x'' \in \sig{\A}(a \lor a'), x'' \succt y''$, or via \eqref{eq:thm_rep_proof}, that $(a \lor a') \succtx b$, as needed for Axiom \ref{ax:union}.
\end{proof}

\subsection{Proof of Theorem \ref{thm:dual}}

\begin{proof}
\ref{t.ax} $\Rightarrow$ \ref{t.filt}
For all $a\in\A$, let $\suc{a} = \{c \in \A \mid c \succ a\} \cup \{\0\}$ denote the strict upper contour set of $a$ (union the bottom element) and $\lc{a} = \{c \in \A \mid a \s c\}$ the weak lower contour set of $a$. It follows from completeness that $\lc{a} = \A \setminus (\suc{b})$.

Now notice that $\suc{a}$ is a proper ideal of $\A$.
Indeed, if $c \in \suc{a}$ and $c' \sq c$, then $c' \s c \succ a$ by Axiom \ref{ax:upper}, and therefore $c' \in \suc{a}$ by transitivity. Likewise, if $c \in \suc{a}$ and $c' \in \suc{a}$ then by Axiom \ref{ax:union}, $(c \lor c')  \in \suc{a}$. Finally, since $a \sq \1$, we have $a \s \1$ by Axiom \ref{ax:upper}, and therefore $\1 \notin \suc{a}$. Finally, $\0 \in \suc{a}$, so $\suc{a}$ is non-empty.

Now it is a straightforward consequence of the Boolean prime ideal theorem for distributive lattices that every ideal is the intersection of all prime ideals containing it (see for example Exercise 3.1.17 of \cite{gehrke2024topological}).

Symbolically, let 
$$\mathcal{I}_a = \{ I \subset \A \mid I \text{ prime ideal of } \A, \suc{a} \subseteq I\}.$$
Then we have
$$\suc{a} = \bigcap_{I \in \mathcal{I}_a} I,$$
implying that
$$\lc{a} = (\bigcap_{I \in \mathcal{I}_a} I)^c =  \bigcup_{I \in \mathcal{I}_a} I^c.$$
Now assume \ref{t.ax}.   The ``$\Leftarrow$'' direction of \ref{t.filt}
is trivial (because if $G \in \sig{\A}(b)$  then $b \in G$).  To prove the ``$\Rightarrow$'' direction of \ref{t.filt}, suppose $a \s b$. Then  $b \in \lc{a}$ so there exists some $I \in \mathcal{I}_a$ such that $b \in I^c$. Now since $I$ is a prime ideal, $I^c$ is a prime filter, so $I^c\in\sig{\A}(b)$. 
Moreover, $I^c \subseteq \lc{a}$, establishing the claim.

\bigskip

\ref{t.filt} $\Rightarrow$ \ref{t.dual} 
First, let $a \s b$. We must establish \eqref{eq:needed1}.
Now since $a \s b$, by \ref{t.filt} there exists some $G^\dag \in \sig{\A}(b)$ such that $a \s b'$ for all $b' \in G^\dag$. So let $F \in \sig{\A}(a)$. We have that for $G^ \dag$, and all $b' \in G^\dag$, $a \s b'$. Since $F$ was arbitrary, this establishes \eqref{eq:needed1}.

Now, let $a \centernot{\s} b$, i.e., let $b \succ a$. We must show that $a  \ {\centernot{\s}^{\hspace{-.2ex}\rightleftarrows}} \ b$.  In other words, 
 \begin{equation}
 \label{eq:needed2}
\exists F \in \sig{\A}(a)  \ \  \text{such that}  \ \  \forall G \in \sig{\A}(b), \   \exists b' \in G  \ \  \text{such that}   \ \ \forall a' \in F,   \ \  b' \succ a'.
\end{equation}

Since $a \s a$, we have by \ref{t.filt} that there exists some $F^\dag \in \sig{\A}(a)$ such that $a \s a'$ for all $a' \in F^\dag$. Moreover, since $a \centernot{\s} b$ we have by \ref{t.filt} that for all $G \in \sig{\A}(b)$ there exists some $b' \in G$ such that $b' \succ a$.  But $a \s a'$ for all  $a' \in F^\dag$.  So by transitivity, 
 for   all $G \in \sig{\A}(b)$,  there exists some $b' \in G$ such that $b' \succ  a'$ for all  $a' \in F^\dag$, establishing \eqref{eq:needed2}.

\ref{t.dual} $\Rightarrow$ \ref{t.ax}
This is an immediate consequence of Lemma \ref{lem:converse}.
%
\end{proof}


\subsection{Proof of Theorems \ref{thm:rep} and \ref{thm:uniq}}

\begin{proof}
Let $\s$ over $\A$ satisfy Axioms \ref{ax:upper} and \ref{ax:union}.
First, note that for all $b \in \A$, $\uc{b}$ is an ideal of $\A$. That $\uc{b}$ is downward closed is an immediate consequence of Axiom \ref{ax:upper}. To see that it is $\lor$-closed, assume to the contrary that $a, a' \in \uc{b}$ but $a \lor a' \notin \uc{b}$. Then by transitivity and completeness, we have $a \s b \succ a \lor a'$ and $a' \s b \succ a \lor a'$. Applying Axiom \ref{ax:union}, we obtain $a \lor a' \succ a \lor a'$, a contradiction.

Let $I = \{a \in \A \mid a \sim \0\} = \uc{\0}$. 
Then $I$ is an ideal by the previous paragraph. 
 Suppose there is a congruence $\cng$ on $\A$ such that
\begin{enumerate}[label=\textup{(\roman*)},itemsep=-1ex,topsep=-.5ex]
\item\label{pf.cng1} $a \cng b$ implies $a \sim b$, and
\item\label{pf.cng2} $[\0]_{\cng} = I$.
\end{enumerate}
\medskip
(We will later show that such a congruence exists.) 
Let $\A_{\cng}$ denote the quotient lattice $\sfrac{\A}{\cng}$, and we can define $\s_{\cng}$ over $\A_{\cng} \setminus [\0]_{\cng}$ in the obvious way: $[a]_{\cng} \s_{\cng} [b]_{\cng}$ if and only if $a \s b$. This is a well-defined weak order by property \ref{pf.cng1}.

The relation $\s_{\cng}$ satisfies Axioms \ref{ax:upper} and \ref{ax:union}; Axiom \ref{ax:union} is obvious. To see Axiom \ref{ax:upper}, let $[a]_\beta \sqsubseteq_{\beta} [b]_\beta$, thus by definition, $[a \land b]_\beta = [a]_\beta \land_\beta [b]_\beta = [a]_\beta$. From \ref{pf.cng1} that $a \land b \sim a$. So by Axiom \ref{ax:upper} (for $\s$) it follows that $a \s b$, and so too that $[a]_\beta \s_{\beta} [b]_\beta$.

As such, by Theorem \ref{thm:dual}, $\s_{\cng}^{\hspace{-.2ex}\leftrightarrows}$ coincides with $\s_{\cng}$. 
It is straightforward to show that $\<X_{\A_{\cng}}, \sig{\A_{\cng}}~\circ~h_\beta, \s_{\cng}^{\hspace{-.2ex}\rightarrow}\>$ is a representation of $\s$ as in \eqref{eq:thm_rep}.\footnote{Notice that for $a \in I$, $\sigma(a) \mapsto \emptyset$, so the required preference holds vacuously.}

To complete the proof to Theorem \ref{thm:rep}, it suffices to produce such a congruence relation. Take the following: $a \cng' b$ if and only if there exists some $c,c' \in I$ such that $b \sq a \lor c$ and $a \sq b \lor c'$.

To see that $\cng'$ is a congruence, let $a_1,a_2,b_1,b_2\in\A$, and suppose that
$a_1\cng' b_1$ and $a_2\cng' b_2$.  Thus, there exist $c_1,c'_1,c_2,c'_2\in I$ such that $b_1 \sq a_1 \lor c_1$, \  $a_1 \sq b_1 \lor c'_1$, \ 
$b_2 \sq a_2 \lor c_2$, and $a_2 \sq b_2 \lor c'_2$.
We must show that
$(a_1\lor a_2) \cng' (b_1\lor b_2)$ and
$(a_1\land a_2) \cng' (b_1\land b_2)$.  
To see the former, observe that
$a_1\lor a_2 \sq (b_1 \lor c'_1)\lor(b_2 \lor c'_2)= (b_1\lor b_2) \lor C'$,
where $C'=c'_1\lor c'_2$.
Likewise, $b_1\lor b_2 \sq (a_1\lor a_2) \lor C$, where $C=c_1\lor c_2$.
But $C, C'\in I$, because  $c_1,c_2,c'_1,c'_2\in I$  and $I$ is $\lor$-closed (being an ideal). Thus,
$(a_1\lor a_2) \cng' (b_1\lor b_2)$. 

Meanwhile, 
$a_1\land a_2 \sq (b_1 \lor c'_1)\land(b_2 \lor c'_2)= (b_1\land b_2) \lor C'$,
where $C' = (b_1\land c'_2)  \lor (c'_1\land b_2) \lor (c'_1\land c'_2)$.
Each of the three disjuncts is in $I$ because $c_1',c'_2\in I$  and $I$ is a downset.  Thus, $C'\in I$ because $I$ is $\lor$-closed.
Likewise, $b_1\land b_2 \sq (a_1\land a_2) \lor C$,
where $C = (a_1\land c_2)  \lor (c_1\land a_2) \lor (c_1\land c_2)\in I$.
Thus, $(a_1\land a_2) \cng' (b_1\land b_2)$.  

To see that $\cng'$  satisfies \ref{pf.cng2}, let $a\in\A$ be arbitrary,
and note that $\0\sq a\lor c$ for
any $c\in I$.  Meanwhile, for any $c'\in I$, we have $\0\lor c'=c'$, 
so $(a\sq   \0\lor c') \Leftrightarrow (a\sq c')$. 
Thus, $(a\in [\0]_{\cng'}) \Leftrightarrow (a\cng' \0)
\Leftrightarrow (a\sq c'  \ \mbox{for some}  \ c'\in I)\Leftrightarrow (a\in I)$, where
the last step is because $I$ is a downset.

 To see that $\cng'$ satisfies \ref{pf.cng1}, let $a \succ b$. Then for all $c \in I$, $c \sim \0 \s a \succ b$. By Axiom \ref{ax:union}, $a \lor c \succ b$, and thus 
by the contrapositive of Axiom \ref{ax:upper}, 
it cannot be that $b \sq a \lor c$; thus, $a \cng' b$ cannot hold.

Finally, to prove Theorem \ref{thm:uniq}, define the coarser congruence on $\A$ as follows: $a \cng'' b$ if and only if $\{c \in \A \mid a \land c \in I\} = \{c \in \A \mid b \land c \in I\}$. 
It is again straightforward to verify that $\cng''$ is a lattice congruence satisfying \ref{pf.cng2} (see Proposition 2.2 of \cite{barzegar2019quotient}). It satisfies \ref{pf.cng1} on account of  Axiom \ref{ax:cong}. So $\<X_{\A_{\cng''}}, \sig{\A_{\cng''}}\circ h_{\cng''}, \s_{\cng''}^{\hspace{-.2ex}\rightarrow}\>$ is a representation. We claim it is minimal. 

Let $\<\hat X, \hat \sigma, \hat \s\>$ be any other representation. By the well known relationships of congruence and homomorphisms (i.e., the `first and second isomorphism theorems'; for specific accounts see Theorems  1.5 and 1.6 of \cite{gratzer2023congruences}) it suffices to show that $\ker \hat\sigma \subseteq \ker \sig{\A_{\cng''}}\circ h_{\cng''}$.


So assume that $(a,b) \in \ker \hat\sigma$, or, that $\hat\sigma(a) = \hat\sigma(b)$. Thus, for any $c \in \A$, we have $\hat\sigma(a\land c) = \hat\sigma(a) \cap \hat\sigma(c)= \hat\sigma(b) \cap \hat\sigma(c) = \hat\sigma(b \land c)$: by \eqref{eq:thm_rep}, $a \land c \sim \0$ if and only if $b \land c \sim \0$. Thus $a \cng'' b$, or, $(a,b) \in \ker h_{\cng''} \subseteq \ker \sig{\A_{\cng''}}\circ h_{\cng''}$.
 \end{proof}

\subsection{Proof of Corollary \ref{thm:cont_uniq}}

\begin{proof}
 Let $\<X, \sigma, \s^{\hspace{-.2ex}\rightarrow}\>$ 
be the minimal representation constructed in the Proof of Theorem \ref{thm:uniq}
on the basis of the congruence $\cng''$ (we omit the $\cng''$-subscript for cleanliness).
Endow $X$ with the spectral topology to form a topological representation.
Let $\CO(X)$ denote the set of all clopen subsets of $X$.  This is a Boolean algebra under union, intersection, and complementation, and 
  $\CO(X)=\KO(X)$, because $X$ is compact and Hausdorff.  Thus, $\CO(X) = \im(\sigma)  \cong \A_{\cng''}$ \cite[Theorem 31, Ch. 34]{GivantHalmos08}.

Since $\A$ is a complete Boolean algebra and $[\0]_{\cng''}$ is a principal ideal (it is principal by the assumption in the theorem, and was shown to be an ideal by the proof of Theorem \ref{thm:rep}), $\A_{\cng''}$ (hence, $\CO(X)$) is also a complete Boolean algebra.\footnote{{\em Proof.}
Any principal ideal is obviously a {\em complete} ideal ---i.e. it is closed under
suprema.  The quotient map defined by any complete ideal is a {\em complete} homomorphism ---i.e. one which preserves all infima and suprema
\cite[Theorem 20, Ch.24] {GivantHalmos08}.  Thus, the quotient
of a complete Boolean algebra by such an ideal is complete.}

Now let $\<\hat X, \hat\tau, \hat \sigma, \hat \s\>$ be any other topological representation, and 
let $\CO(\hat X)$ denote the algebra of clopen subsets of $\hat X$.  Theorem \ref{thm:uniq} ensures the existence a surjective homomorphism
$h:  \im(\hat \sigma) \to \CO(X)$
such that $\sigma = h \circ \hat\sigma$. By the Sikorski extension theorem,\footnote{ See Theorem 33.1 of \cite{sikorski1969boolean} or Theorem 5 of \cite{GivantHalmos08}.} $h$ can be extended to a (surjective) homomorphism $h^*: \CO(\hat X) \to \CO(X)$. By Stone duality, there exists a continuous and injective $i: X \to \hat X$ such that $h^*(B) = i^{-1}(B)$ for all $B \in \CO(\hat X)$.
 \end{proof}

\singlespace
\small
\bibliographystyle{plain}

\bibliography{do.bib}

\end{document}